\newtheorem*{remark}{Remark}
\newtheorem{theorem}{Theorem}
\newtheorem{lemma}{Lemma}
\newtheorem{corollary}{Corollary}
\newcommand*{\cH}{\mathcal{H}}
\newcommand*{\cK}{\mathcal{K}}
\newcommand*{\cS}{\mathcal{S}}
\newcommand*{\id}{\mathbf{1}}
\newcommand*{\tr}{\mathrm{tr}}
\newcommand*{\ket}[1]{| #1 \rangle}
\newcommand*{\eps}{\varepsilon}
\begin{document}
\title{Confidence Polytopes in Quantum State Tomography}

\author{Jinzhao \surname{Wang}} \author{Volkher B. \surname{Scholz}} \author{Renato \surname{Renner}}
\affiliation{Institute for Theoretical Physics, ETH Zurich,
  Switzerland}

\begin{abstract}

Quantum State Tomography is the task of inferring the state of a quantum system from  measurement data. A reliable tomography scheme should not only report an estimate for that state, but also well-justified error bars. These may be specified in terms of confidence regions, i.e., subsets of the state space which contain the system's state with high probability. Here, building upon a quantum generalisation of Clopper-Pearson confidence intervals | a notion known from classical statistics | we present a simple and reliable  scheme for generating confidence regions. These have the shape of a polytope and can be computed efficiently.  We  provide several examples to demonstrate the practical  usability of the scheme in experiments.
\end{abstract}

\maketitle

\section{Introduction}

\emph{Quantum State Tomography (QST)} may be regarded as the quantum variant of statistical estimation theory. Given data obtained from measuring a quantum system, the goal is to estimate the system's state. QST has become an increasingly important tool in experimental physics, especially in the area of quantum information technology.  Accordingly, a lot of work has been put into the development of techniques to increase its efficiency. Among them are methods to reduce the number of different measurements needed  and to keep the (generally unfavourable) scaling of the amount of required data  in the dimension of the system under control~\cite{Cramer,Gross,Aaronson,Ferrie,Huszar,Toth,Goyeneche,Struchalin,Chapman}.  

Nonetheless, only relatively little attention has  been paid to the problem of \emph{statistical errors} in QST.  Statistical errors are due to unavoidable fluctuations, resulting from the fact that the  collected data always represents a finite sample.  In other words, they  are those errors that  remain even if the experiment is implemented perfectly and shielded from any environmental noise. 

In experimental sciences, statistical errors are generally reported in terms of error bars, which are obtained by standard methods from classical statistics.  In the context of quantum information, techniques to determine error estimates have been developed for specific tasks, such as  entanglement verification and quantum metrology~\cite{Walter,Sugiyama15,Baur,Rosset,Blume2010verification,Rehacek,Jungnitsch}. These are however not universal enough to be applicable to QST.  In fact,  an agreed-upon scheme for reporting the accuracy of estimates in QST does not seem to exist. Experimental results in QST are therefore often stated without error bars, or with error bars that do not have a well-defined operational meaning. A widespread approach is to use point estimators for the system's state, such as \emph{Maximum Likelihood Estimation (MLE)}~\cite{Hradil97,Hradiletal04} (for examples, see \cite{Blattetal04,Zeilingeretal05,BlattWineland08,Wallraffetal09}), and take the width of the likelihood function as a measure for their accuracy~\cite{Sugiyama11}. Another common, heuristic, method to determine the accuracy is  numerical \emph{boostrapping}, or \emph{resampling}~\cite{Home09,Efron93}. The resulting error bars then correspond to the variance of the point estimators. But since these are generally highly biased, they do not correctly reflect the uncertainty in the state estimate~(see \cite{Blume12} for a discussion). 

These problems can be avoided with methods that, rather than giving point estimates, yield \emph{regions} in state space. The idea is that these regions contain, with high probability, the (unknown) state, $\rho$, i.e., the state in which the system was prepared.  Depending on what  is meant by ``high probability'', one talks about  \emph{credibility regions} or \emph{confidence regions}.

\emph{Credibility regions} are motivated by the Bayesian approach to probability theory, where probabilities are interpreted as measures for personal belief or knowledge~\cite{Blume10,Shang,Ferrie14,Granade}. To use this approach in QST, it is necessary to specify a \emph{prior}, i.e., a probability distribution over the possible states~$\rho$, that reflects one's personal belief  before considering the measurement data.  The corresponding credibility region obtained from QST then has the property that it contains $\rho$ with high probability according to the \emph{posterior} belief, i.e., the updated belief one would have after taking into account the measurement data. The reported credibility region thus has a well-defined operational meaning | but only for those who agree with the prior. Unfortunately, there is no unique natural choice for the latter; even when demanding certain symmetries, the class of possible priors is usually infinitely large.  

 \emph{Confidence regions} avoid this prior-dependence. While they are generally larger than the credibility regions of the Bayesian approach, they contain the unknown state~$\rho$ with high probability | independently of what the prior was. Currently, there exist two approaches to obtain confidence regions. One of them, due to Blume-Kohout and Glancy \emph{et al.}, uses a construction based on the computation of likelihood ratios~\cite{Blume12, Glancy}. Although supported by heuristic arguments, it has however, to the best of our knowledge, not been established rigorously that the constructed regions are valid confidence regions. In the other approach, due to Christandl \& Renner~\cite{ChristandlRenner},  confidence regions are constructed by extending  credibility regions for a particular symmetric prior.  While the validity of these regions has been proved rigorously, their size is far from optimal (see the discussion below). 

In this paper, we propose an alternative method to determine  confidence regions in QST. It is based upon a generalisation of a notion from classical statistics, known as \emph{Clopper-Pearson confidence intervals}~\cite{ClopperPearson}. Given data from any informationally complete measurement, the corresponding confidence regions have the shape of a polytope  (see Fig.~\ref{fig:polytope}), with facets that can be  computed efficiently. As we shall demonstrate, this simple structure can also be exploited to optimise the choice of tomographic measurements for more accuracy.

\begin{figure}[t]
  \centering
  \begin{subfigure}[b]{0.49\linewidth}
    \includegraphics[width=\linewidth]{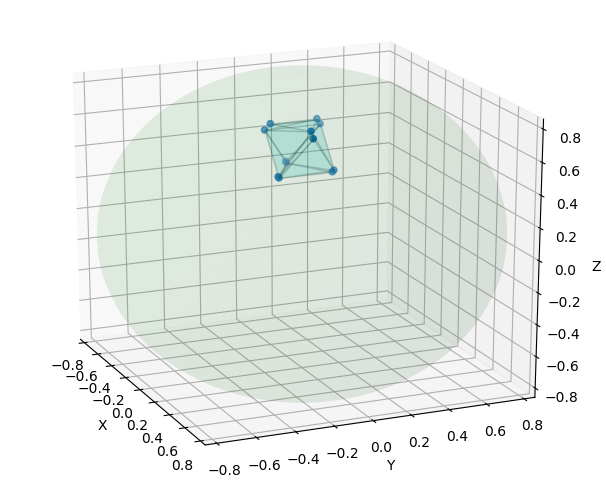}
    \caption{SIC POVM}
    \label{fig:SIC-POVM}
  \end{subfigure}
  \begin{subfigure}[b]{0.49\linewidth}
    \includegraphics[width=\linewidth]{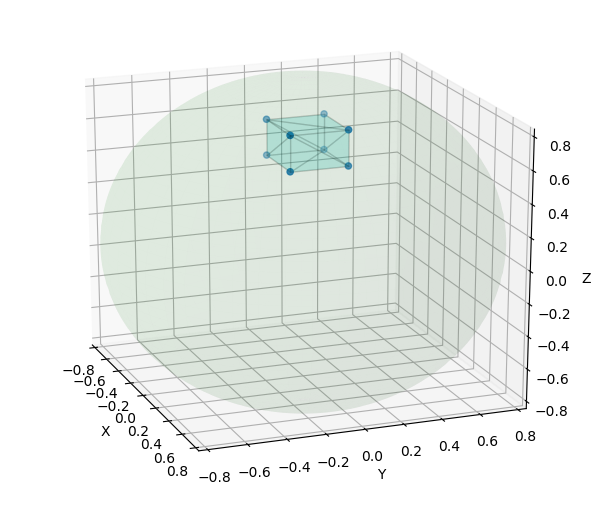}
    \caption{MUB POVM}
    \label{fig:MUB-POVM}
  \end{subfigure}
  \caption{{\bf Confidence  polytopes for QST on a qubit.} The plots show  confidence polytopes with confidence level $0.999$, obtained from data of simulated measurements on a single qubit. The polytopes lie within the Bloch sphere, which represents the entire state space of the qubit.  In (a), the measurement operators are chosen such that they form a Symmetrical Informationally Complete (SIC) POVM \cite{Renes}.   In this case, the resulting confidence polytope is the intersection of two tetrahedrons whose normal directions are given by the measurement directions. In (b), the measurement POVM is defined by a set of Mutually Unbiased Bases (MUB) \cite{Ingemar07}, and the polytope is a rectangular box whose normal directions are given by the three Pauli operators. 
 }
  \label{fig:polytope}
\end{figure}

The remainder of this paper is structured as follows. In Section~\ref{sec_construction} we describe our method to construct confidence polytopes from measurement data and formalise their operational meaning by Theorem~\ref{main}.  In Section~\ref{sec_optimise} we show how to extract, from the geometry of the polytopes,  information about the statistical uncertainty in the data, from which it is possible to determine further measurements to reduce this uncertainty.  Section~\ref{sec_comparison} is devoted to an assessment of the optimality of our confidence polytopes, which we do by comparing them to credibility regions obtained from particularly symmetric priors (Hilbert-Schmidt priors). We conclude in Section~\ref{sec_discussion} with a discussion and suggestions for future work.
 
\section{Construction of Confidence Polytopes} \label{sec_construction}

In classical estimation theory, one of the most basic problems is to determine the bias~$P$ of a biased coin from a given sample of tosses. The  Clopper-Pearson interval solves this problem ``exactly'', i.e., without involving approximations. In particular, the interval represents a reliable confidence region for~$P$, even in extreme cases, e.g., when $P\approx 0$ or $P \approx1$, in which other schemes may fail. This feature turns out to be crucial for QST, where the measurement statistics often contains such extreme cases, especially when the unknown state is close to the boundary of the state space.

In QST, one usually considers the following scenario (see~\cite{ChristandlRenner} for a more general treatment, which does not assume repeated preparations). A $d$-dimensional quantum system is repeatedly prepared in the same unknown state $\rho$, i.e., an element of the set $\cS(\cH_d)$ of density operators on a $d$-dimensional Hilbert space. After each preparation, a measurement, described by a Positive-Operator Valued Measure (POVM) on $\cH_d$ with elements $E_i$, for $i=1, \ldots, k$, is applied. After $n$ such preparation-and-measurement rounds, the data can be brought into the form of a $k$-tuple $\mathbf{n}:=(n_1,\dots, n_k)$, where $n_i$ denotes the number of occurrences of an outcome corresponding to the POVM element $E_i$. One may therefore equivalently regard $\mathbf{n}$ as the outcome associated to a POVM element of the form  $B^{\mathbf{n}}=\frac{n!}{\prod_i n_i!} \bigotimes_i E_i^{\otimes n_i}$ applied to~$\rho^{\otimes n}$. 

We are interested in constructing a QST procedure that computes, from the measurement outcome $\mathbf{n}$, a confidence region, denoted by $\Gamma(B^{\mathbf{n}})$, for any desired \emph{confidence level}~${1-\eps}$, where $\eps > 0$.  This means that, except with probability $\eps$, the unknown state $\rho$ is contained in $\Gamma(B^{\mathbf{n}})$, i.e., 
\begin{align*}
  \mathrm{Pr}[\rho \in \Gamma(B^{\mathbf{n}})] \geq {1-\eps} \ .
\end{align*}
 Crucially, this bound is supposed to hold for any arbitrary~$\rho$ (which would not be the case for a credibility region, where $\rho$ must be sampled from a given prior), whereas $\mathrm{Pr}[\cdot]$ should  be understood as the probability taken over all possible outcomes $\mathbf{n}$~\footnote{The bound does not necessarily hold conditioned on the event that a particular outcome~$\mathbf{n}$ occurred, i.e., the conditional probability that $\rho \in \Gamma(B^{\mathbf{n}})$ may be larger or smaller than $1-\eps$; see~\cite{ChristandlRenner} for a discussion.}.

We break down the problem of constructing  confidence regions into determining a confidence half-space for each individual POVM element $E_i$, depending on the measured frequency~$\frac{n_i}{n}$. The idea is that each half-space  corresponds to a one-sided Copper-Pearson confidence interval. The intersection of all such half-spaces, for all the POVM elements, then forms a confidence region, as asserted by the following theorem. For its formulation, we use the binary relative entropy, which is defined as $D(x\|y)=x\log(\frac{x}{y})+(1-x)\log(\frac{1-x}{1-y})$.

\begin{theorem} \label{main}
  Consider a QST setup as described above, with unknown state $\rho \in \cS(\cH_d)$ and measurements defined by a POVM $\{E_i\}$.   Let ${1-\eps}$ be the desired confidence level and let $\eps_i > 0$ be such that $\sum_i \eps_i = \eps$. For any possible measurement data $\mathbf{n} = (n_1, \ldots, n_k)$ and for any~$i$, define 
 \begin{align*}
   \Gamma_i(n_i) = \bigl\{\sigma \in \cS(\cH_d): \, \tr (E_i \sigma) \leq \frac{n_i}{n} + \delta(n_i, \eps_i) \bigr\}\end{align*}
    with $\delta(n_i, \eps_i)$  the positive root of $D(\frac{n_i}{n}\|\frac{n_i}{n}+\delta)=-\frac{1}{n}\mathrm{log}(\eps_i)$. Then
  \begin{align*}
  \Gamma(B^{\mathbf{n}}):=\bigcap_i \Gamma_i(n_i) 
  \end{align*}
 is a confidence region with confidence level ${1-\eps}$.
\end{theorem}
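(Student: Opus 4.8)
The plan is to peel the confidence region apart one POVM element at a time. By definition of the intersection, the event $\rho\notin\Gamma(B^{\mathbf n})$ is the union $\bigcup_i\{\tr(E_i\rho)>\frac{n_i}{n}+\delta(n_i,\eps_i)\}$, so by the union bound (and using $\sum_i\eps_i=\eps$) it suffices to show that for each fixed $i$,
\[
  \mathrm{Pr}\Bigl[\tr(E_i\rho)>\tfrac{n_i}{n}+\delta(n_i,\eps_i)\Bigr]\le\eps_i .
\]
This reduces the whole problem to a one-dimensional, essentially classical task for each $i$: constructing a one-sided upper confidence limit for the ``bias'' $p_i:=\tr(E_i\rho)$ of the effective coin that lands on outcome $i$.

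First I would observe that, under the stated model ($n$ independent preparation-and-measurement rounds, equivalently the POVM $B^{\mathbf n}$ acting on $\rho^{\otimes n}$), the outcome statistics is multinomial with parameters $n$ and $(\tr(E_1\rho),\dots,\tr(E_k\rho))$, so the marginal count $n_i$ is distributed as $\mathrm{Bin}(n,p_i)$. The event above is then a purely classical statement about $n_i$ and the deterministic ``upper-limit'' function $g_i(m):=\frac{m}{n}+\delta(m,\eps_i)$. Let $S_i:=\{m\in\{0,\dots,n\}:g_i(m)<p_i\}$; if $S_i=\emptyset$ the probability is $0$, and otherwise, writing $m_0:=\max S_i$, the event $\{p_i>g_i(n_i)\}$ forces $n_i\le m_0$, whence $\mathrm{Pr}[p_i>g_i(n_i)]\le\mathrm{Pr}[n_i\le m_0]$. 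Since $\delta(m_0,\eps_i)>0$ we have $\frac{m_0}{n}<g_i(m_0)<p_i$, and since $y\mapsto D(\frac{m_0}{n}\|y)$ is strictly increasing on $(\frac{m_0}{n},1)$, the defining equation $D(\frac{m_0}{n}\|g_i(m_0))=-\frac1n\log\eps_i$ upgrades to $D(\frac{m_0}{n}\|p_i)>-\frac1n\log\eps_i$. Applying the relative-entropy Chernoff bound $\mathrm{Pr}[\mathrm{Bin}(n,p_i)\le m_0]\le\exp\bigl(-nD(\frac{m_0}{n}\|p_i)\bigr)$ (valid since $\frac{m_0}{n}\le p_i$) then gives $\mathrm{Pr}[n_i\le m_0]<\eps_i$, as required; summing over $i$ and passing to complements completes the argument. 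Note in particular that this route never needs $g_i$ to be monotone in $m$.

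The argument is short, so the only real care is needed in two routine-but-not-automatic places, which I would flag as the ``hard'' parts. The exact identification $n_i\sim\mathrm{Bin}(n,p_i)$ uses the i.i.d.\ preparation assumption essentially — in the more general scenario of \cite{ChristandlRenner} this step must be replaced by a de Finetti or martingale argument — so the model should be stated carefully. And one must confirm that $\delta(n_i,\eps_i)$ is well defined: for $n_i<n$ the map $y\mapsto D(\frac{n_i}{n}\|y)$ increases continuously from $0$ to $+\infty$ on $[\frac{n_i}{n},1)$, so there is a unique root $>\frac{n_i}{n}$ and hence $\delta(n_i,\eps_i)>0$; in the boundary case $n_i=n$ no root exists and one adopts the natural convention $\Gamma_i(n)=\cS(\cH_d)$, which makes the $i$-th term of the union bound empty and leaves the argument intact. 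I do not anticipate any genuine obstacle beyond this bookkeeping.
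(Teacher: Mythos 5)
Your proposal is correct, and it is built from the same ingredients as the paper's proof (union bound over POVM elements, reduction of each term to a binomial marginal $n_i\sim\mathrm{Bin}(n,p_i)$ with $p_i=\tr(E_i\rho)$, and the relative-entropy Chernoff--Hoeffding bound), but it is organised differently. The paper argues in two stages: it first introduces the \emph{exact} Clopper--Pearson regions $\Gamma'_i(n_i)$, defined by the exact binomial tail condition $\sum_{j=0}^{n_i}\binom{n}{j}(\tr E_i\rho)^j(1-\tr E_i\rho)^{n-j}\le\eps_i$, proves via Chernoff--Hoeffding and monotonicity of the tail in $p$ that $\Gamma'_i(n_i)\subseteq\Gamma_i(n_i)$, and then shows separately (using the exact tails and the monotone shrinking of $\Gamma'^c_i$ in $n_i$ to locate the largest offending count $m_i$) that $\bigcap_i\Gamma'_i(n_i)$ is itself a valid confidence region. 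You bypass the Clopper--Pearson intermediate altogether: you locate the critical count $m_0=\max\{m:g_i(m)<p_i\}$ directly from the half-space condition, upgrade the defining equation of $\delta$ to $D(\frac{m_0}{n}\|p_i)>-\frac1n\log\eps_i$ by monotonicity of $D$ in its second argument, and apply the Chernoff lower-tail bound once to $\mathrm{Pr}[n_i\le m_0]$. This is a legitimately more direct route, it avoids needing any monotonicity of $m\mapsto g_i(m)$, and your explicit treatment of well-definedness of $\delta(n_i,\eps_i)$ and of the boundary case $n_i=n$ (convention $\Gamma_i(n)=\cS(\cH_d)$) is cleaner than the paper, which is silent on that edge case. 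What the paper's detour buys is the stronger intermediate statement that the intersection of the \emph{exact} Clopper--Pearson regions is already a $(1-\eps)$-confidence region and that your polytope merely contains it; this makes explicit the advertised connection to Clopper--Pearson intervals and records that the relative-entropy relaxation is only a (slight) enlargement made for computability. Your argument proves exactly the stated theorem, but not that sharper containment.
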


\begin{proof} See Appendix~\ref{appA}. \end{proof}

Given a family $\{\lambda_i\}$ of generalised Pauli matrices  with the orthogonality relation $\tr \lambda_i \lambda_j =2\delta_{i j}$, we can embed the space $\cS(\cH_d)$ of density operators~$\rho$ into the Euclidean space $\smash{\mathbb{R}^{d^2-1}}$ of vectors~$\mathbf{r}$ via the relation~\cite{BertlmannKrammer}
\begin{align*}
\rho=\frac{1}{d}(\id+\sqrt{\frac{d(d-1)}{2}} \mathbf{r} \cdot \boldsymbol{\lambda}) \ .
\end{align*}
The Euclidean metric on $\smash{\mathbb{R}^{d^2-1}}$ then corresponds to the Hilbert-Schmidt metric for density matrices~\cite{BEN}. Similarly, we can parametrise  each POVM element $E_i$ by a vector $\boldsymbol{\eta}_i$ in $\mathbb{R}^{d^2-1}$, i.e., 
\begin{align*}
E_i=\frac{1}{m_i}(\id+\sqrt{\frac{d(d-1)}{2}} \boldsymbol{\eta}_i \cdot \boldsymbol{\lambda}) \ ,
\end{align*}
where $m_i$ are constants that satisfy $\sum_i \frac{1}{m_i}=1$. Theorem~\ref{main} may now be rephrased in terms of these representations in $\smash{\mathbb{R}^{d^2-1}}$.

\begin{corollary} \label{mainc}
Consider a QST setup as in Theorem~\ref{main}, with an unknown state~$\rho$ parametrised by $\mathbf{r} \in \mathbb{R}^{d^2-1}$ and POVM elements $E_i$ parametrised by $\boldsymbol{\eta}_i \in \mathbb{R}^{d^2-1}$. Then the intersection of the embedding of the state space $\cS(\cH_d)$ in $\smash{\mathbb{R}^{d^2-1}}$ with the half-spaces given by 
\begin{align} \label{eq_polytopefacets}
1+(d-1)\mathbf{r} \cdot \boldsymbol{\eta}_i\leq m_i(\frac{n_i}{n}+\delta(n_i,\eps_i)) \ .
\end{align}
is a confidence region with confidence level~${1-\eps}$. 
\end{corollary}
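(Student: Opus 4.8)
The plan is to recognise that Corollary~\ref{mainc} is simply Theorem~\ref{main} re-expressed in the Bloch-type coordinates, so the whole task reduces to a direct computation verifying that the affine embedding $\rho \mapsto \mathbf{r}$ carries each half-space $\Gamma_i(n_i)$ onto the half-space \eqref{eq_polytopefacets}; the probabilistic content is then inherited from Theorem~\ref{main} with no further work.

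Concretely, I would substitute the stated parametrisations $\rho = \frac{1}{d}(\id + \sqrt{d(d-1)/2}\,\mathbf{r}\cdot\boldsymbol{\lambda})$ and $E_i = \frac{1}{m_i}(\id + \sqrt{d(d-1)/2}\,\boldsymbol{\eta}_i\cdot\boldsymbol{\lambda})$ into $\tr(E_i\rho)$ and expand the product of the two bracketed operators. The two ``cross'' terms are traces of single $\lambda_j$'s, hence vanish by tracelessness of the generalised Pauli matrices, while the quadratic term is evaluated using the orthogonality relation $\tr\lambda_j\lambda_l = 2\delta_{jl}$; after collecting the normalisation factors $d$ and $m_i$ one is left with
\begin{align*}
\tr(E_i \rho) = \frac{1}{m_i}\bigl(1 + (d-1)\,\mathbf{r}\cdot\boldsymbol{\eta}_i\bigr) \ .
\end{align*}
Thus the inequality $\tr(E_i\sigma) \le \frac{n_i}{n} + \delta(n_i,\eps_i)$ defining $\Gamma_i(n_i)$ in Theorem~\ref{main} becomes, in coordinates, exactly \eqref{eq_polytopefacets}.

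It then remains to observe that $\rho \mapsto \mathbf{r}$ is an injective affine map of $\cS(\cH_d)$ into $\mathbb{R}^{d^2-1}$, so it commutes with intersections: the image of $\Gamma(B^{\mathbf{n}}) = \bigcap_i \Gamma_i(n_i)$ under the embedding is precisely the intersection of the embedded state space with the half-spaces \eqref{eq_polytopefacets}. Since the events ``$\rho \in \Gamma(B^{\mathbf{n}})$'' and ``$\mathbf{r}$ lies in that intersection'' coincide, the confidence bound $\mathrm{Pr}[\rho\in\Gamma(B^{\mathbf{n}})] \ge 1-\eps$ guaranteed by Theorem~\ref{main} transfers verbatim. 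There is no genuine obstacle in this argument; the only points requiring a little care are the correct bookkeeping of the constants $d$ and $m_i$ through the expansion, and confirming that the embedding is injective so that rephrasing in coordinates does not alter the event whose probability Theorem~\ref{main} controls.
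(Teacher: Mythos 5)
Your proposal is correct and matches the paper's own argument: both simply plug the parametrisations into the Born rule, use tracelessness and the orthogonality relation $\tr\lambda_j\lambda_l=2\delta_{jl}$ to get $\tr(E_i\rho)=\frac{1}{m_i}(1+(d-1)\,\mathbf{r}\cdot\boldsymbol{\eta}_i)$, and then note that the half-space conditions of Theorem~\ref{main} become exactly~\eqref{eq_polytopefacets}, so the confidence level carries over. Your extra remarks on the injectivity of the affine embedding and the bookkeeping of constants are harmless additional detail, not a different route.
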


  If the POVM is informationally complete then the confidence region is contained in a polytope in $\smash{\mathbb{R}^{d^2-1}}$, the \emph{confidence polytope}, whose facets are defined by~\eqref{eq_polytopefacets}.  Note that one may, in an obvious manner, also combine data obtained from QST measurements with respect to different POVMs on the same system. The resulting confidence region then corresponds to the intersection of the confidence polytopes belonging to each POVM, with appropriately adapted confidence levels (see Appendix~\ref{appA}). Furthermore, given measurement data for a fixed POVM, one may refine the polytope with additional facets, obtained by grouping POVM elements together to form new such elements (see Appendix~\ref{appB} for more details). Note also that the family $\{\lambda_i\}$ used for the embedding of $\cS(\cH_d)$ into  $\smash{\mathbb{R}^{d^2-1}}$  is arbitrary, up to the constraint that it satisfies the orthogonality relation above.  In particular, the $\lambda_i$'s may be chosen to match the elements of the POVM that define the tomographic measurements. The statistical errors belonging to each measurement direction can then be easily read off from the polytope (as discussed in the next section).
 
\begin{figure*}
  \centering
  \begin{subfigure}[t]{0.19\linewidth}
    \includegraphics[width=\linewidth]{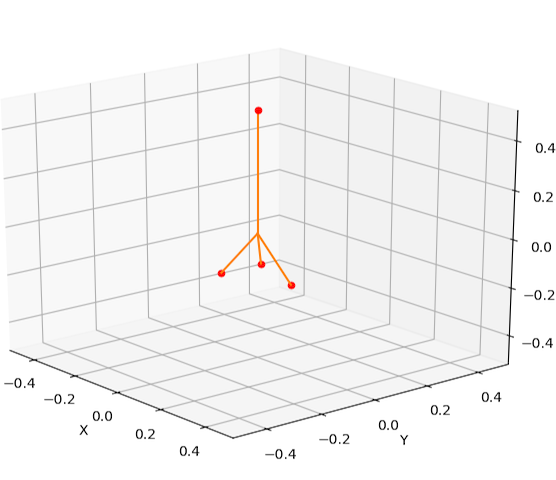}
    \caption{skewed SIC POVM}
    \label{fig:demo0}
  \end{subfigure}
  \begin{subfigure}[t]{0.34\linewidth}
    \includegraphics[width=\linewidth]{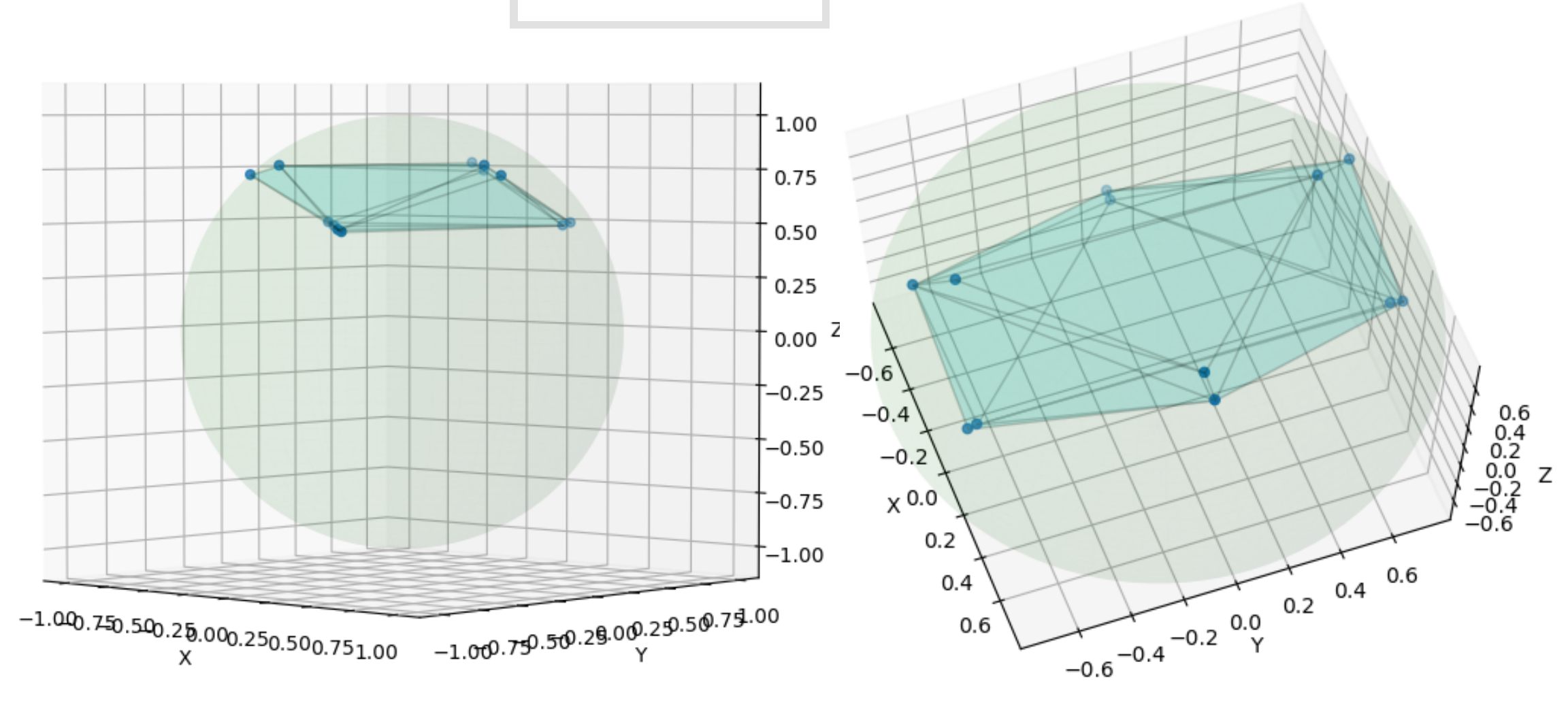}
    \caption{initial confidence polytope}
    \label{fig:demo1}
  \end{subfigure}
  \begin{subfigure}[t]{0.21\linewidth}
    \includegraphics[width=\linewidth]{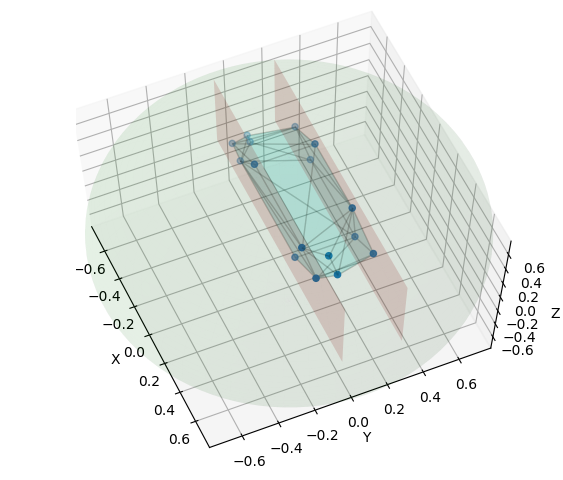}
    \caption{extra $Y$ measurements}
    \label{fig:demo2}
  \end{subfigure}
  \begin{subfigure}[t]{0.21\linewidth}
    \includegraphics[width=\linewidth]{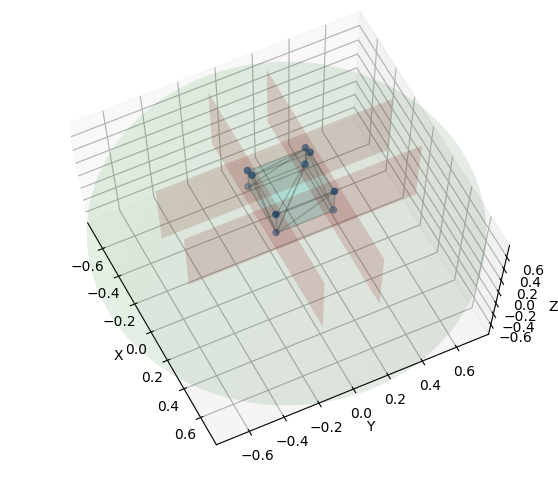}
    \caption{extra $X$ measurements}
    \label{fig:demo3}
  \end{subfigure}
\caption{{\bf Optimising the information content of measurements.}   The red dots in~(a) represent the elements of a skewed SIC POVM, which has more distinguishing power in the $Z$~direction than in the $X$ and $Y$~directions (cf. Appendix~\ref{appE} for a precise definition). (b) shows the confidence polytope that is obtained from $5000$ measurements defined by this POVM. (c) \& (d) depict the effect of $1000$ additional projective measurements in both the $X$ and the $Y$ direction. The red planes represent the new facets introduced by the extra measurements. \label{fig_LearningFrom Polytope}
  }
\end{figure*}

To illustrate this, we provide examples of simple QST scenarios. The first is QST on a single qubit, where the state space is three-dimensional, so that the confidence polytopes can  be depicted easily (Fig.~\ref{fig:polytope}). For higher-dimensional systems, the reporting of full polytopes is in general no longer sensible.  In these cases, one may instead extract certain 
\emph{figures of merit} from them, such as the fidelity of the unknown state with respect to a desired reference state.  The confidence interval of such a figure of merit is then determined by its maximum and minimum among all states of the polytope. In practice, these numbers may be estimated by sampling at random from the polytope. We demonstrate this for  QST on a noisy Bell state  (Table~\ref{tab:bell}) with simulated measurement data and on $s$-qubit GHZ states~\footnote{An $s$-qubit GHZ state is defined as $\ket{\psi}={(\ket{0}^{\otimes s}+\ket{1}^{\otimes s})/\sqrt{2}}$.} for $s=2,3,4$ (Table~\ref{tab:ibmq}) with  data from IBM's Q Experience~\cite{IBM} (cf.\ Appendix~\ref{appC}). For all our examples we chose a confidence level defined by $\eps=0.001$.

\begin{table}[h!]
  \begin{center}
    \begin{tabular}{l|c|c|c}
        & \text{Fidelity} & \text{Trace distance} & \text{Negativity}\\ 
      \hline
      \text{MLE state} & $>0.973$ & $<0.0902$ & \multirow{ 2}{*}{$(0.393, 0.459)$}\\
      \cline{1-3}
      \text{Perfect Bell state} & $(0.944, 0.980)$ & $(0.0546, 0.133)$ & 
    \end{tabular}
    \caption{{\bf QST of simulated noisy Bell state.} A confidence polytope with confidence level $0.999$ was generated for  data  from simulated SIC POVM measurements on $10^4$ copies of a noisy Bell state. The table shows the confidence intervals, which are derived from the confidence polytope, for various figures of merit, such as the fidelity to and the distance from particular reference states (MLE denotes the state obtained by Maximum Likelihood Estimation), or the negativity, which is a measure for entanglement \cite{Vidal}. }\label{tab:bell}
  \end{center}
\end{table}

\begin{table}[h!]
  \begin{center}
    \begin{tabular}{l|c|c|c|c}
       & \text{data size} $n$ & \text{Fidelity} & \text{Trace distance} & \text{Negativity}\\ 
      \hline
      \text{GHZ2 } &$9\times 1024$& $(0.903, 0.940)$ & $(0.131, 0.208)$ & $(0.318, 0.386)$\\ 
      \hline
      \text{GHZ3} & $27\times 1024$ & $(0.837, 0.869)$ & $(0.313, 0.371)$ & /\\
      \hline
      \text{GHZ4} & $81\times 1024$ & $(0.944, 0.980)$ & $(0.0546, 0.133)$ & /   
    \end{tabular}
    \caption{{\bf QST of GHZ states on IBM's Q Experience.} GHZ states of $2$, $3$, and $4$ qubits  were prepared on IBM's 5-qubit device ``ibmqx2'' and then measured with respect to the Pauli basis on each qubit. The sample size is given by the number of different measurement directions times the shot count (each measurement is repeated 1024 times). The third and fourth column show the deviation from perfect GHZ states. The confidence intervals are calculated for a $0.999$ confidence level. \label{tab:ibmq}}
  \end{center}
\end{table}

\section{Using Polytopes to Optimise Measurements} \label{sec_optimise}

The shape of the confidence polytopes  provides information about the distribution of the statistical errors. This, in turn, enables the choice of particular additional measurements to improve the precision of QST. We demonstrate this with a simple example of QST on a single qubit. (Its low dimensionality allows us to illustrate the idea by intuitive plots in the Bloch sphere picture | but a generalisation to higher-dimensional spaces is straightforward.) We start with a biased informationally complete POVM, which may be regarded as a skewed version of a SIC POVM (cf.\ Fig.~\ref{fig_LearningFrom Polytope}(a)). The polytope obtained after $5000$ measurements is much more extended in the $X$ and the $Y$ direction than in the $Z$ direction (Fig.~\ref{fig_LearningFrom Polytope}(b)). Therefore, $1000$ extra measurements along each the $X$ and the $Y$ direction  help to ``refine'' the polytope, yielding a smaller confidence region (Figs.~\ref{fig_LearningFrom Polytope}(c) and~(d)).  

In higher dimensions, extracting the relevant geometrical information can be computationally expensive. One may however simplify this task by considering a \emph{bounding box} in the representation space $\smash{\mathbb{R}^{d^2-1}}$, defined as the minimum enclosing hyper-rectangle with faces perpendicular to the axes given by the basis $\{\lambda_i\}$. Ideally, this basis should be chosen such that it contains experimentally accessible observables (e.g., tensor powers of Pauli matrices).  Since the orientation of the bounding box is fixed by the basis, the corners of the box can be determined via simple linear programs. If a particular edge of the bounding box is long, it implies that the confidence polytope is more extended in that direction, and further measurements along the corresponding axis would be effective in reducing its size.

\section{Comparison to Bayesian Credibility Regions} \label{sec_comparison}

Confidence regions have the advantage over  Bayesian credible regions that they do not rely on any prior knowledge. Conversely, credibility regions are generally smaller than confidence regions, thus giving tighter state estimates~\cite{Jaynes}. Clearly, if the prior is already highly peaked around the actual (unknown) state of the system, the credibility regions obtained by QST can be arbitrarily small. However, numerical results indicate that, in the case of relatively flat priors, the resulting credibility regions turn out to be comparable in size to the confidence polytopes introduced here. 

Specifically, we take priors defined by the Hilbert-Schmidt measure $d \rho$~\footnote{Given the Haar measure, $d \phi$, over the purification space ${\cH \otimes \cK}$, the Hilbert-Schmidt measure over $\cS(\cH), d \rho$, is induced by tracing out $\cK$.}. A region $\Gamma(B^{\mathbf{n}})$ of the state space has credibility ${1-\eps_b}$ with respect to this prior if the condition 
\begin{align} \label{baye}
\int_{\Gamma(B^{\mathbf{n}})}\mu_{B^{\mathbf{n}}}(\rho) d\rho\geq 1-\eps_b \ ,
\end{align}
with $\mu_{B^{\mathbf{n}}}(\rho)d\rho=\smash{\frac{\tr[B^{\mathbf{n}}\rho^{\otimes n}]d\rho}{\int\tr[B^{\mathbf{n}}\rho^{\otimes n}]d\rho}}$, holds. For our comparison, we take $\Gamma(B^{\mathbf{n}})$ to be a $({1-\eps})$-confidence polytope as in Theorem~\ref{main} and determine its credibility level $\eps_b$ by~\eqref{baye}. We then plot the ratio $\eps/\eps_b$ for randomly chosen states.  As shown in Fig.~\ref{fig:scaling}, the numerics indicate that this ratio does not scale with the dimension of the measured quantum system nor with the data size~\footnote{The computation was carried out with the \emph{Tomographer} software by Faist~\cite{Faist}.}. Confidence polytopes therefore provide rather tight estimates for the unknown state. In particular, they outperform the earlier construction proposed by Christandl \& Renner~\cite{ChristandlRenner}. In the latter, $({1-\eps})$-confidence regions are constructed from particular $({1-\eps_b})$-credibility regions, but $\eps$ is larger than $\eps_b$ by a factor polynomial in the dimension of the measured system (cf.\ Appendix~\ref{appD}). 

\begin{figure}
  \centering
  \begin{subfigure}[b]{0.49\linewidth}
    \includegraphics[width=\linewidth]{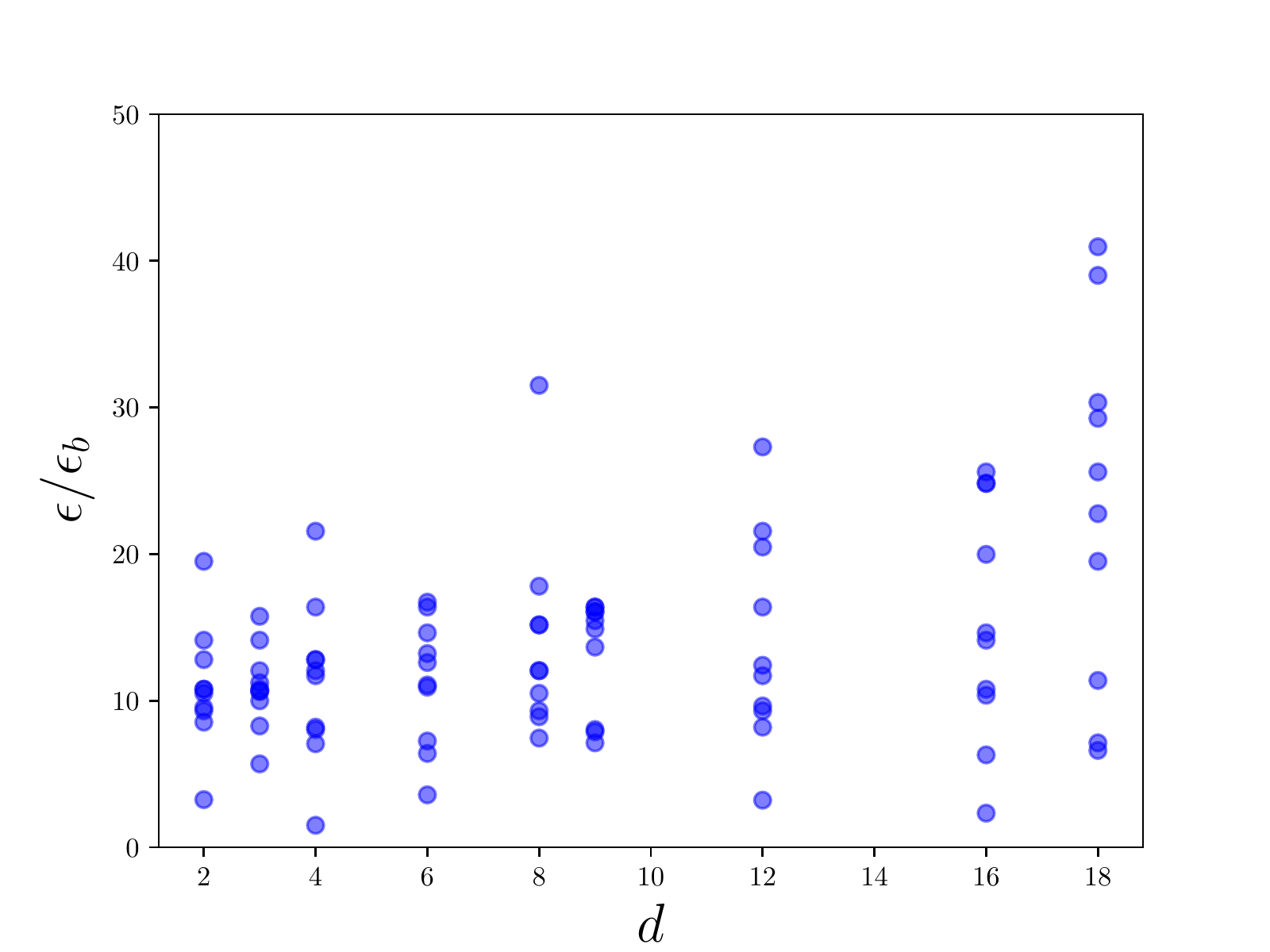}
    \caption{Dependence on dimension}
    \label{fig:scalingdim}
  \end{subfigure}
  \begin{subfigure}[b]{0.49\linewidth}
    \includegraphics[width=\linewidth]{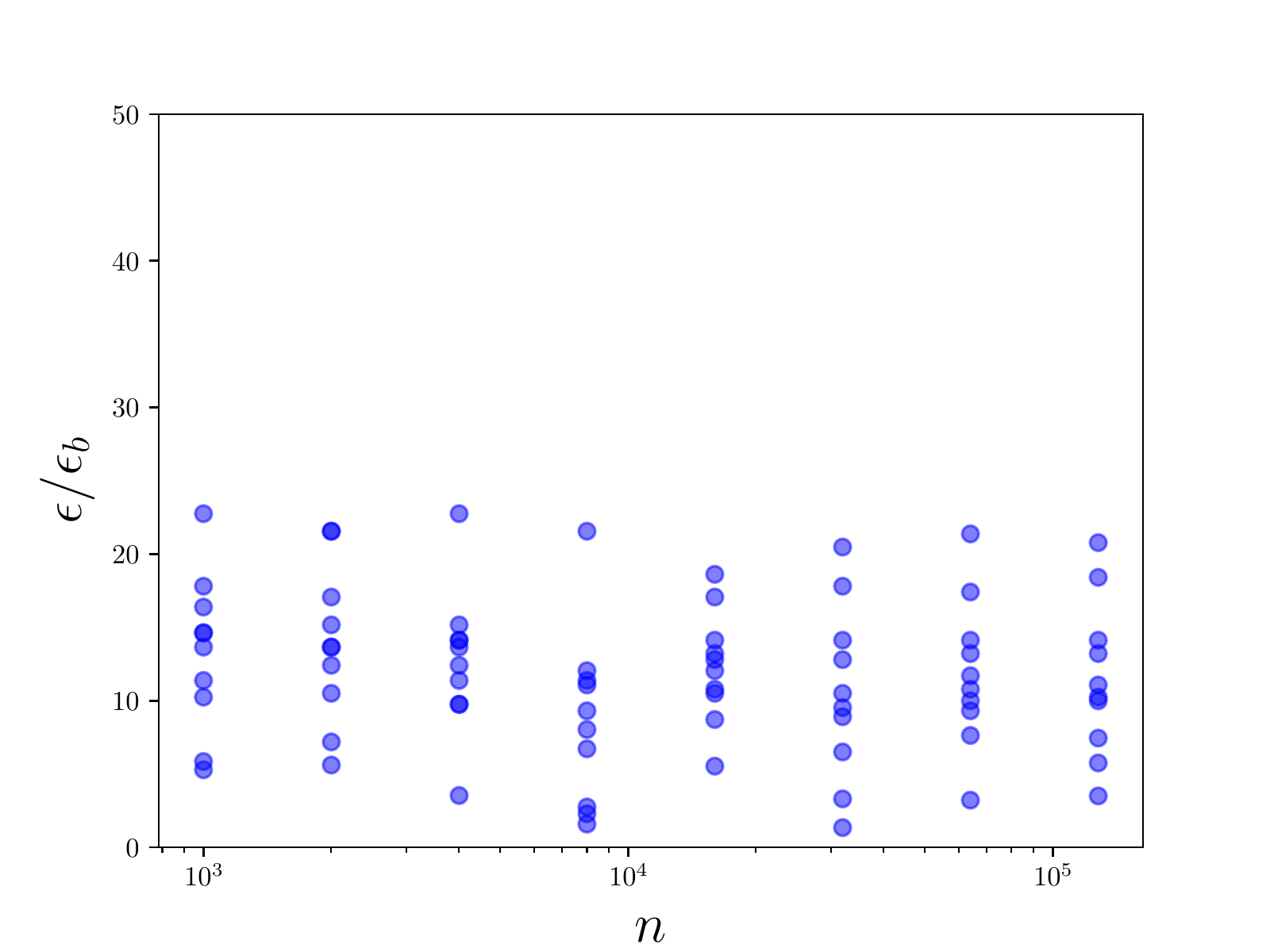}
    \caption{Dependence on data size}
    \label{fig:scalingn}
  \end{subfigure}
\caption{{\bf Confidence vs.\ credibility regions.}  The plots show the ratio $\eps/\eps_b$ between the confidence and credibility levels of a polytope constructed according to the prescription of Theorem~\ref{main}, interpreted as a confidence region and as a credibility region, respectively.  Each dot was obtained by QST on $n$ copies of a state chosen at random from a $d$-dimensional state space. Although there are fluctuations due to the different choices of states and measurement outcomes, no scaling in the data size~$n$ or the dimension~$d$ is observed. 
  }
 \label{fig:scaling}
\end{figure}

\section{Discussion} \label{sec_discussion}

Our work is a first attempt to generalise  methods from classical confidence interval estimation to QST.  The resulting construction of confidence polytopes provides various improvements over previously known methods. In particular, confidence polytopes are comparable in size to credibility regions, and they are efficiently computable. 

Apart from the Clopper-Pearson confidence intervals that we used for our construction, there exist several other methods to determine confidence intervals in classical statistics, many of which rely however on approximations~\cite{AgrestiCoull,BlythStill,Wilson,Thulin,Sakakibara,BrownCaiDasGupta01}.  Some of these methods yield confidence intervals that are smaller than Clopper-Pearson intervals and thus seem to have more prediction power~\cite{Thulin,BrownCaiDasGupta01}.  Conversely, the Clopper-Pearson confidence intervals are a safe choice, in the sense that they never result in an overestimation of the confidence level. Furthermore, for sample sizes~$n$ of the order $10^5$, which is common in QST, the actual coverage probability of the Clopper-Pearson intervals is very close to the claimed confidence level~\cite{Thulin}.  Having said this, an interesting alternative may be \emph{Agresti-Coull confidence intervals}~\cite{AgrestiCoull,BrownCaiDasGupta01,BrownCaiDasGupta02}. These are also efficiently computable and their coverage probability fluctuates with small amplitude around the claimed confidence level.

Confidence polytopes as proposed here may also  be combined with methods for  dimension-scalable QST. These are based on additional assumptions about the unknown state, e.g., that it has bounded rank~\cite{Baldwin}, that it is permutation-invariant~\cite{Toth}, or that it has a matrix product state (MPS) structure~\cite{Cramer}. These assumptions generally restrict the relevant state space. Accordingly, it is sufficient to construct confidence polytopes within this restricted space.  

While we argued that the result of QST should be confidence regions rather than point estimates, one may still ask whether sensible state estimates can be obtained. However, the bias of such estimates, or ``˜quantum gerrymandering'', as Carlton Caves put it~\cite{Caves}, appears to be a general problem, especially when  the unknown state is close to the positivity boundary of the state space. For example, the state estimate using MLE is biased towards a rank-$2$ density operator when the unknown state is a pure qutrit~\cite{Caves}. In our case, when the polytope intersects with the boundary,  the part of the polytope outside the state space is truncated, and thus a state estimate drawn from our confidence region would be biased. Whether and how a state estimate without such bias can be obtained is still an open question, whose answer seems to require a better understanding of the boundary geometry of the state space~\cite{Scholten}. 

As shown above, rather than reporting the full confidence polytope as the outcome of a QST experiment, it is often sensible to characterise it with one (or a few) parameters. There is of course a lot of freedom in how to do so. One possibility is to report the centre of the smallest enclosing ball with respect to a given distance measure (also known as the Chebyshev ball~\cite{Boyd}), and take its radius as the error bar. Alternatively, one could treat the state obtained from any point estimation scheme, such as MLE or \emph{Constrained Least Square}, as a reference and report its maximum distance to any of the vertices of the polytope as the error bar~\footnote{Determining this distance exactly may be difficult for high-dimensional systems, because the vertex enumeration problem is in general NP-hard~\cite{Khachiyan}. There also exists no efficient algorithm to compute the size of the polytope whilst taking into account the positivity boundary~\cite{Suess}.  An alternative, however, is to calculate an approximate solution by sampling within the polytope, as we did it for the examples in this paper.}. In this sense, our methods, rather than replacing current state estimation schemes,  endow them with error bars that characterise the statistical (un)certainty of the estimates.  We look forward to test these methods in future collaborations with experimental groups.

\section{Acknowledgements}

We thank Philippe Faist for providing the \emph{Tomographer} software, and Rotem Arnon-Friedman, Kenny Choo, Johannes Heinsoo, Raban Iten, Joeseph Renes, Ernest Tan, and Chi Zhang for useful discussions. This work received support from the Swiss National Science Foundation via the National Center for Competence in Research ``QSIT''.

 \bibliographystyle{apsrev}

 \providecommand \doibase [0]{http://dx.doi.org/}%

\onecolumngrid

\appendix

\newpage
\section*{{\Large Appendices}}
\bigskip

\renewcommand{\thesubsection}{\Alph{subsection}}

\subsection{Proofs of the Main Claims}\label{appA}

\setcounter{lemma}{0}
\setcounter{theorem}{0}
\setcounter{corollary}{0}
\setcounter{equation}{0}


\noindent {\it Proof of Theorem~\ref{main}.} 
The proof is subdivided into two parts. In the first we  reduce the conditions
\begin{align}\label{ineq1}
\rho \in \Gamma_i(n_i) \;\;\; \iff \;\;\;&  \tr (E_i \rho) \leq \frac{n_i}{n} + \delta(n_i, \eps_i)
\end{align}
to conditions that characterise the Clopper-Pearson construction, i.e., 
$$\Gamma'(B^{\mathbf{n}}):=\bigcap_i\Gamma'_i(n_i) \ ,$$ 
where the regions $\Gamma_i'(n_i)$ may be defined via their complements as 
\begin{align}\label{ineq2}
\rho \in \Gamma'^c_i(n_i)\;\;\; \iff \;\;\;&\sum_{j=0}^{n_i}\binom{n}{j}(\tr E_i\rho)^j(1-\tr E_i\rho)^{n-j} \leq \eps_i \ .
\end{align}
In the second, we show that the latter gives proper confidence regions in quantum state space. 

Note that we can treat the sample mean $\frac{n_i}{n}$ as a deviation $\delta(\rho)$ from the expectation $\tr E_i\rho$, i.e., $\tr E_i\rho=\frac{n_i}{n}+\delta(\rho)$, and hence rewrite the condition in~(\ref{ineq2}) as $$\sum_{j=0}^{n_i}\binom{n}{j}(\frac{n_i}{n}+\delta(\rho))^j(1-\frac{n_i}{n}-\delta(\rho))^{n-j} \leq \eps_i \ . $$
The Chernoff-Hoeffding Theorem~ \cite{Hoeffding,Chernoff} asserts that, for any $\delta > 0$,
$$\sum_{j=0}^{n_i}\binom{n}{j}(\frac{n_i}{n}+\delta)^j(1-\frac{n_i}{n}-\delta)^{n-j} \leq e^{-nD(\frac{n_i}{n}||\frac{n_i}{n}+\delta)}  $$  
where $D(x||y)=x\log(\frac{x}{y})+(1-x)\log(\frac{1-x}{1-y})$ is the Kullback-Leibler divergence between Bernoulli distributed random variables. Hence, taking $\delta = \delta(n_i, \eps_i)$ to be the positive real that solves $D(\frac{n_i}{n}||\frac{n_i}{n}+\delta)=-\frac{1}{n}\mathrm{log}(\eps_i)$, we have
 $$\sum_{j=0}^{n_i}\binom{n}{j}(\frac{n_i}{n}+\delta)^j(1-\frac{n_i}{n}-\delta)^{n-j} \leq \eps_i \ . $$
Suppose now that $\rho \notin \Gamma_i(n_i)$. According to~(\ref{ineq1}), we then have $\delta(\rho) > \delta$.  Because the function $f(p)=p^j(1-p)^{n-j}$ has negative gradient for $p>j/n$, we get
 $$\sum_{j=0}^{n_i}\binom{n}{j}(\frac{n_i}{n}+\delta(\rho))^j(1-\frac{n_i}{n}-\delta(\rho))^{n-j} \leq\sum_{j=0}^{n_i}\binom{n}{j}(\frac{n_i}{n}+\delta)^j(1-\frac{n_i}{n}-\delta)^{n-j} \leq \eps_i \ . $$
Hence, by virtue of~\eqref{ineq2}, $\rho \in \Gamma'^c_i(n_i)$, from which we can conclude $$\Gamma^c_i(n_i) \subseteq\Gamma'^c_i(n_i) \implies \Gamma'_i(n_i) \subseteq\Gamma_i(n_i) \implies \Gamma'(B^{\mathbf{n}}) \subseteq\Gamma(B^{\mathbf{n}}) \ .$$ 

We now start with the second part of the proof, in which we show that the Clopper-Pearson condition~(\ref{ineq2}) yields a confidence region in the QST scenario. More specifically, we show that the probability that $\Gamma'(B^{\mathbf{n}})$ does not contain the unknown state $\rho$ is upper bounded by~$\eps$. Denoting by $\chi$  the indicator function, we have
\begin{align*}
\mathrm{Pr}[\rho \in \Gamma'^c(B^{\mathbf{n}})]&=\sum_{\mathbf{n}} \tr B^{\mathbf{n}}\rho^{\otimes n} \chi(\rho \in \bigcup_i \Gamma'^c_i(n_i))\\
&\leq \sum_{i}\sum_{\mathbf{n}} \tr B^{\mathbf{n}}\rho^{\otimes n} \chi(\rho \in \Gamma'^c_i(n_i)) \\
&=\sum_{i} \sum_{\mathbf{n}} n!\prod_j \frac{1}{n_j!}\tr(E_j^{\otimes n_j}\rho^{\otimes n})\chi(\rho \in \Gamma'^c_i(n_i)) \\
&=\sum_{i} \sum_{n_i=0}^{n} \binom{n}{n_i}\tr(E_i^{\otimes n_i}(\id-E_i)^{\otimes n-n_i}\rho^{\otimes n})\chi(\rho \in \Gamma'^c_i(n_i)) \ ,
\end{align*}
where, for the last equality, we used that $\chi(\rho \in \Gamma'^c_i(n_i))$ is independent of $n_j$ for any $j \neq i$, so that the sum over these values $n_j$ can be easily evaluated. 

It is obvious from~(\ref{ineq2}) that $\Gamma'^c_i$ shrinks monotonically for increasing $n_i$, i.e., $n_i < n'_i \implies \Gamma'^c_i(n'_i) \subset \Gamma'^c_i(n_i)$. We can therefore define a maximum value $m_i$ such that $\rho \in \Gamma'^{c}_i(m_i)$ but $\rho \notin \Gamma'^{c}_i(n_i)$ for any $n_i > m_i$. (It may also be that there is no value $n_i$ for which $\rho \in \Gamma'^c_i(n_i)$, in which case we set $m_i < 0$ in the expressions below.) The above sum may then be rewritten as
\begin{align*}
\mathrm{Pr}[\rho \in \Gamma'^c(B^{\mathbf{n}})]&\leq\sum_{i} \sum_{n_i=0}^{m_i} \binom{n}{n_i}\tr(E_i^{\otimes n_i}(\id-E_i)^{\otimes n-n_i}\rho^{\otimes n}) \ .
\end{align*}
Finally, because $\rho \in \Gamma'^{c}_i(m_i)$, it follows from (\ref{ineq2}) that
\begin{align*}
\mathrm{Pr}[\rho \in \Gamma'^c(B^{\mathbf{n}})]&\leq\sum_i \sum_{n_i=0}^{m_i}\binom{n}{n_i}(\tr E_i\rho)^{n_i}(1-\tr E_i\rho)^{n-n_i}\\
&\leq \sum_i \eps_i=\eps \ .
\end{align*}

Since $\Gamma'(B^{\mathbf{n}}) \subseteq \Gamma(B^{\mathbf{n}})$, as shown in the first part of the proof, we conclude that $\mathrm{Pr}[\rho \in \Gamma^c(B^{\mathbf{n}})]\leq \eps$.

\qed


\noindent { \it Proof of Corollary~\ref{mainc}.}
Plugging the parametrisations into the Born rule, the conditions of Theorem~\ref{main} can be written as $$\frac{1}{m_i}(1+(d-1)\mathbf{r} \cdot \boldsymbol{\eta}_i)=\tr E_i\rho \leq \frac{n_i}{n}+\delta (n_i,\eps_i) \ . $$
where we used the orthogonality conditions $\tr \lambda_i \lambda_j =2\delta_{i j}$ and the fact that the Pauli operators have trace zero.
Hence, for each~$i$, we obtain an Euclidean half-space with the normal corresponds to the vector $\boldsymbol{\eta}_i$. The intersection of them corresponds to the confidence region in $\mathbb{R}^{d^2-1}$. 

\qed

\begin{remark}
If the POVM is informationally complete, there are at least $d^2$ linearly independent vectors $\boldsymbol{\eta}_i$ and $d^2$ corresponding half-spaces, which is enough to form a  convex polytope in $\mathbb{R}^{d^2-1}$.
\end{remark}

When data from separate measurements with respect to different POVMs (e.g., projective measurements along various axes) on the same identically prepared system are available,  the confidence polytopes of the individual measurements may be combined into one single polytope, as asserted by the following lemma.

\begin{lemma}
Consider $M$ measurements and suppose that measurement $j \in \{1, \ldots, M\}$ has a confidence region $\Gamma^{(j)}$ with confidence level $1-\eps^{(j)}$. Then the intersection of all of them is a confidence region $\Gamma$ with confidence level  $1-\sum_{j=1}^M \eps^{(j)}$.
\end{lemma}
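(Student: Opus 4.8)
The plan is to invoke nothing more than the union bound (finite subadditivity of probability), exactly as in the last step of the proof of Theorem~\ref{main}. Fix the unknown state $\rho \in \cS(\cH_d)$ and consider the joint probability space over the tuple of outcomes $(\mathbf{n}^{(1)}, \ldots, \mathbf{n}^{(M)})$, where $\mathbf{n}^{(j)}$ denotes the data produced by measurement $j$. Since the $M$ measurements are performed on separate copies of the identically prepared system, the outcomes $\mathbf{n}^{(j)}$ are in fact independent given $\rho$, though the argument will not use this. The region $\Gamma^{(j)} = \Gamma^{(j)}(\mathbf{n}^{(j)})$ is by construction a function of $\mathbf{n}^{(j)}$ alone.

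First I would rewrite the failure event. By definition of intersection, $\rho \notin \Gamma = \bigcap_{j=1}^M \Gamma^{(j)}$ holds precisely when $\rho \notin \Gamma^{(j)}$ for at least one $j$, i.e.\ $\{\rho \notin \Gamma\} = \bigcup_{j=1}^M \{\rho \notin \Gamma^{(j)}\}$. Subadditivity then gives
\[
  \mathrm{Pr}[\rho \notin \Gamma] \;\leq\; \sum_{j=1}^M \mathrm{Pr}[\rho \notin \Gamma^{(j)}] \ .
\]

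Next I would bound each summand by the hypothesis. The event $\{\rho \notin \Gamma^{(j)}\}$ depends only on $\mathbf{n}^{(j)}$, so its probability in the joint space equals the probability computed from the marginal law of $\mathbf{n}^{(j)}$ — which is exactly the single-measurement setting in which $\Gamma^{(j)}$ was assumed to be a confidence region of level $1-\eps^{(j)}$. Hence $\mathrm{Pr}[\rho \notin \Gamma^{(j)}] \leq \eps^{(j)}$ for each $j$, and combining this with the previous display yields $\mathrm{Pr}[\rho \notin \Gamma] \leq \sum_{j=1}^M \eps^{(j)}$. As this estimate holds uniformly over all $\rho \in \cS(\cH_d)$, the region $\Gamma$ is a confidence region with confidence level $1 - \sum_{j=1}^M \eps^{(j)}$, as claimed.

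There is little here that counts as a real obstacle; the only point deserving a word of care is the marginalisation step — making explicit that ``confidence level $1-\eps^{(j)}$ for measurement $j$'' is a statement about the marginal distribution of $\mathbf{n}^{(j)}$, so it transfers verbatim to the joint space irrespective of how the outcomes of the various measurements are correlated. (If one further wished to let the choice of a later POVM depend adaptively on earlier outcomes, the same computation goes through after conditioning on the earlier data; but the statement as given takes all $M$ measurements to be fixed in advance.)
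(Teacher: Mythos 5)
Your proof is correct and follows essentially the same route as the paper's: the paper's own argument is exactly the one-line union bound $\mathrm{Pr}[\rho \in \Gamma^c]=\mathrm{Pr}[\rho \in \bigcup_j (\Gamma^{(j)})^c] \leq \sum_j \mathrm{Pr}[\rho \in (\Gamma^{(j)})^c] \leq \sum_j \eps^{(j)}$. Your extra remark that each hypothesis concerns only the marginal law of $\mathbf{n}^{(j)}$, and hence is insensitive to correlations between the measurements, is a careful elaboration of a step the paper leaves implicit, but it does not change the argument.
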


\begin{proof}
The proof is elementary, 
\begin{align*}
\mathrm{Pr}[\rho \in \Gamma^c]
=\mathrm{Pr}[\rho \in \bigcup_j (\Gamma^{(j)})^c] 
\leq \sum_j \mathrm{Pr}[\rho \in (\Gamma^{(j)})^c] 
\leq \sum_j \eps^{(j)} \ .
\end{align*}

\end{proof}

We point out, however, that in practical situations it may be better to choose the different measurements  stochastically to circumvent a systematic drift. In this case, the measurements should be modelled as part of one single POVM with appropriately weighted elements.

\subsection{Confidence Polytopes with Higher Order Facets}\label{appB}

\begin{figure}[t]
  \centering
  \begin{subfigure}[b]{0.49\linewidth}
    \includegraphics[width=\linewidth]{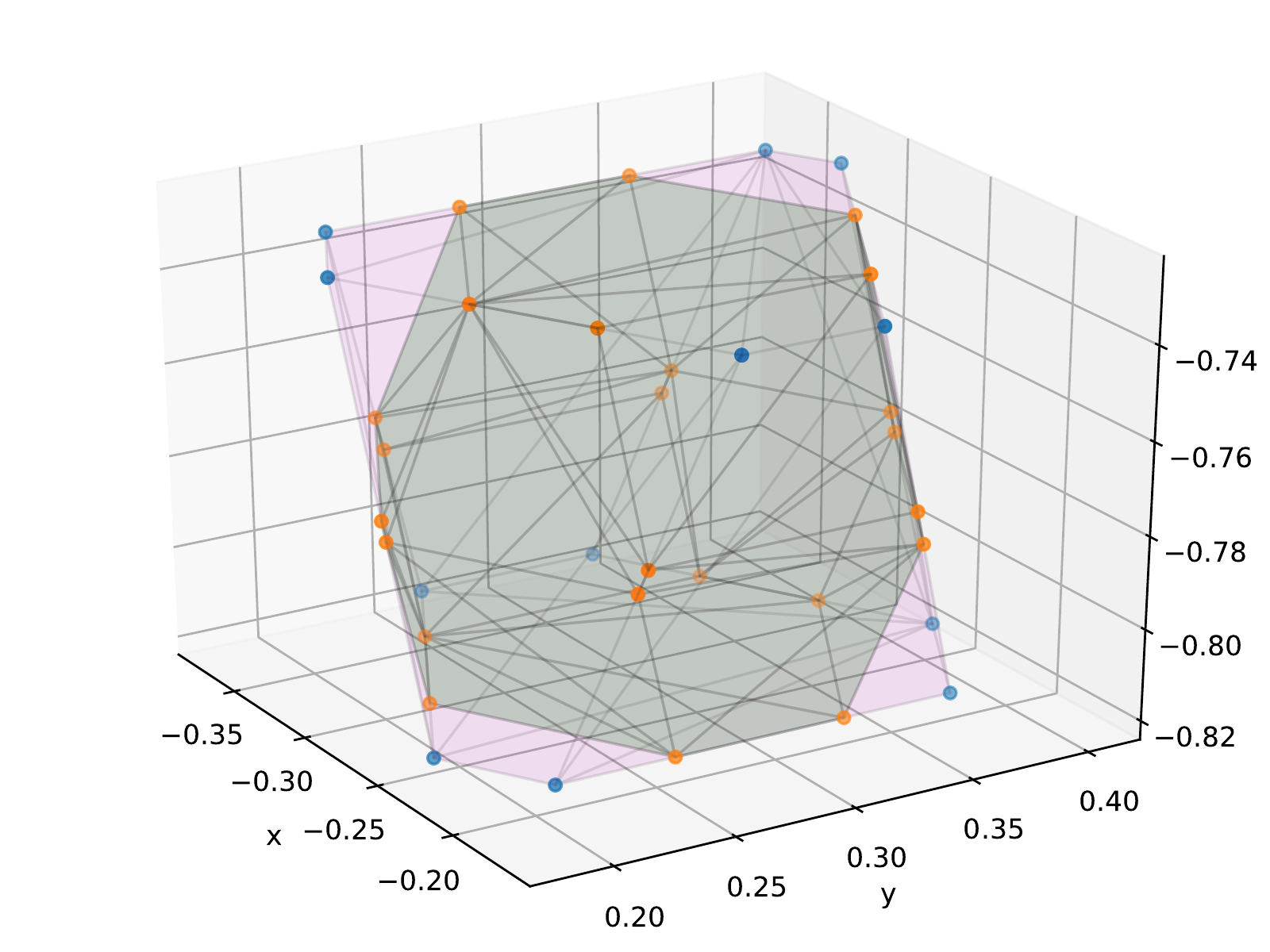}
    \caption{SIC POVM}
    \label{fig:morefacets1}
  \end{subfigure}
  \begin{subfigure}[b]{0.49\linewidth}
    \includegraphics[width=\linewidth]{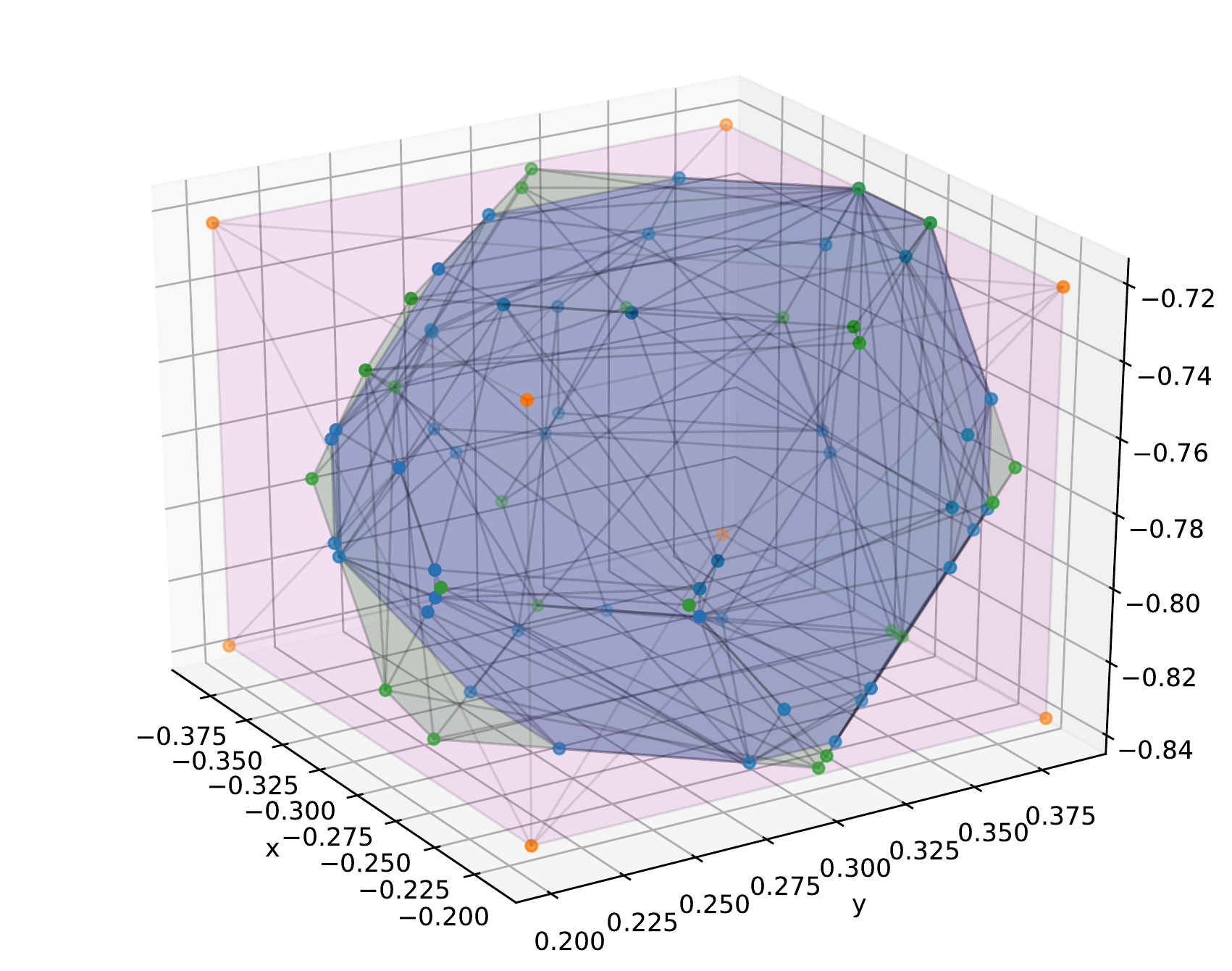}
    \caption{MUB POVM}
    \label{fig:morefacets2}
  \end{subfigure}
  \caption{{\bf Higher order facets.} The plots illustrate the effect of including more facets in the specification of the confidence polytope. These additional facets are obtained by grouping  POVM elements to form new ones. (a)~shows the result of QST on a qubit with a SIC POVM, which has $4$ elements. The grey-green confidence polytope is a refinement of the pink one, obtained by adding facets. Each of the additional facets corresponds to a grouping of the POVM elements according to the scheme $(2,2)$, i.e., each of the groups consists of $2$ elements. (b)~shows the outcome of QST with measurements defined by a MUB POVM composed of $6$ elements. The green and blue confidence polytopes are refinements of the pink polytope obtained by grouping these elements according to the scheme $(2,4)$ and $(3,3)$, respectively.}
  \label{fig:refine}
\end{figure}

The confidence polytope we constructed has one facet for each of the POVM elements  $E_i$ (cf.\ Theorem~\ref{main} and Corollary~\ref{mainc}). However, we can obtain more facets by considering groupings of the POVM elements. To illustrate this, consider QST on a qubit with measurements defined by a SIC POVM, which contains $4$ elements. By grouping them as  $\{E_1+E_2, E_3+E_4\}$, $\{E_1+E_3, E_1+E_4\}$, and $\{E_1+E_4, E_2+E_3\}$, we can obtain $6$ additional facets, as shown in Fig.~\ref{fig:refine}(a).  The effect of introducing these additional facts is even more striking in the case of a MUB POVM, as shown in Fig.~\ref{fig:refine}(b). The only prize we have to pay for these additional facets is that the confidence level $\eps$ must be split into a larger number of components~$\eps_i$. But since these enter the definition of the confidence region only logarithmically, this effect is negligible as long as the  POVM has only few elements. However, for a general POVM with $k$ elements, the number of  facets that can in principle be constructed by such grouping can be as large as $2^k-2$, where $k\geq d^2$ for an informationally complete POVM. For higher-dimensional systems, determining a grouping that yields a polytope of minimal size is thus a non-trivial optimisation problem, which we leave for future work.  

\subsection{Quantum State Tomography with IBM's Q Experience } \label{appC}

\begin{figure}[t]
  \centering
  \begin{subfigure}[t]{0.4\linewidth}
    \includegraphics[width=\linewidth]{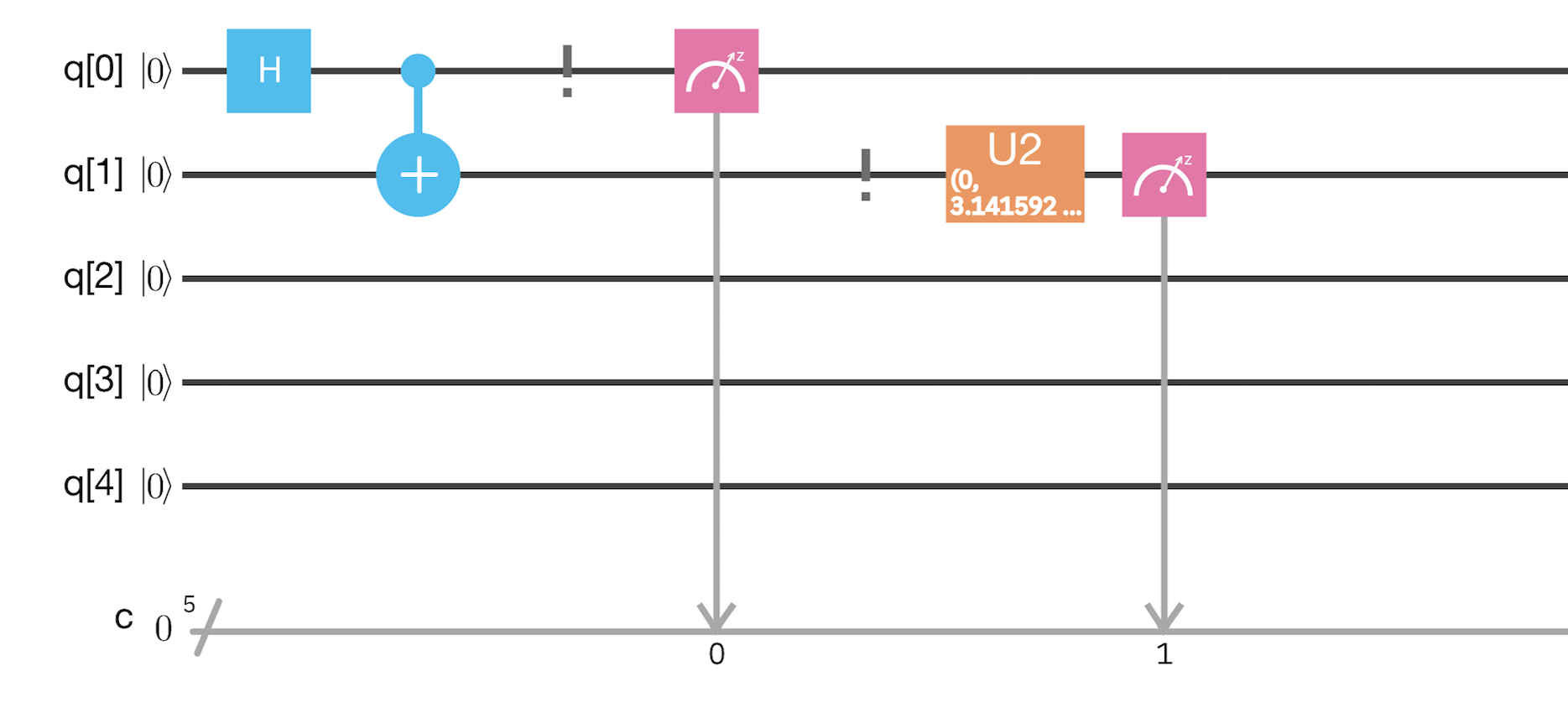}
    \caption{GHZ2}
  \end{subfigure}
  \begin{subfigure}[t]{0.49\linewidth}
    \includegraphics[width=\linewidth]{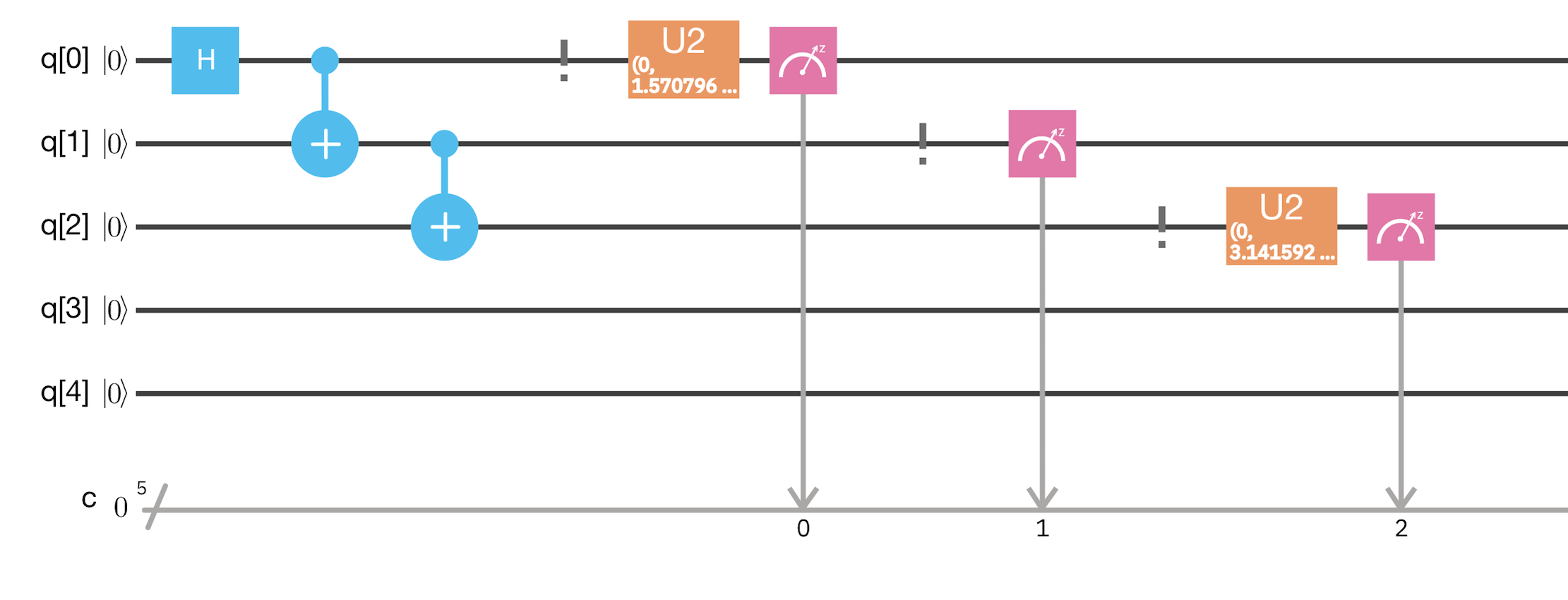}
    \caption{GHZ3}
  \end{subfigure}
    \begin{subfigure}[b]{0.8\linewidth}
    \includegraphics[width=\linewidth]{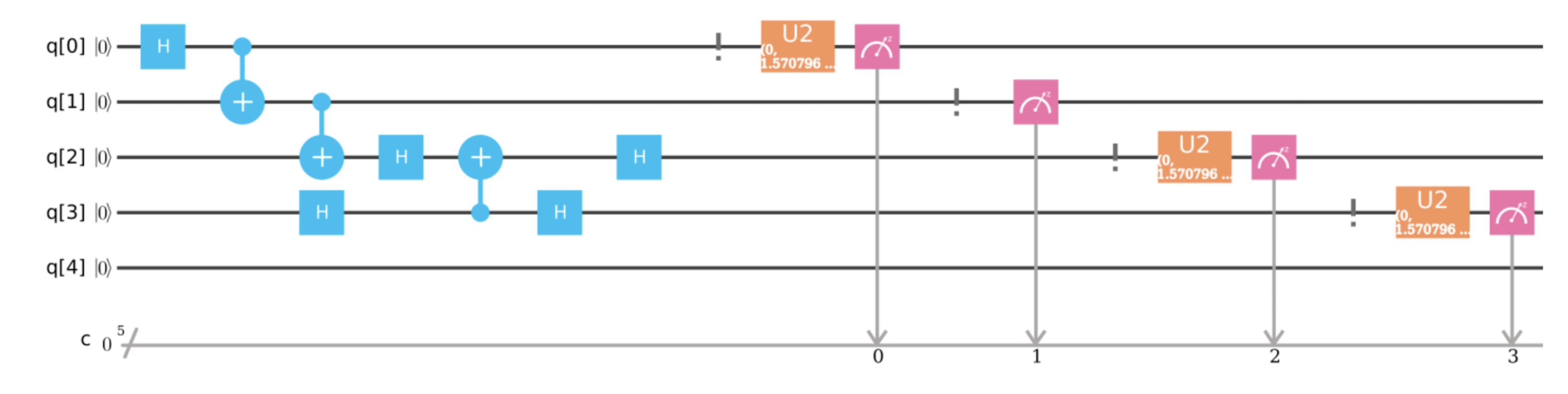}
    \caption{GHZ4}
  \end{subfigure}
  \caption{{\bf Tomography on IBM's Q Experience.} The figure shows the circuits used for our QST experiment with GHZ states  of different sizes on ``ibmqx2''.  }
  \label{fig:circuit}
\end{figure}

\begin{table}
  \begin{center}
    \begin{tabular}{l|c|c|c|c|c}
       & \text{Q0} & \text{Q1} & \text{Q2} & \text{Q3}& \text{Q4}\\ 
      \hline
      \text{Frequency(GHz)} &$5.28$& $5.21$ & $5.02$ & $5.28$& $5.07$\\ 
      \hline
      \text{T1($\mu s$)} & $59.40$ & $67.80$ & $68.90$ & $48.90$& $66.00$ \\ 
      \hline
      \text{T2($\mu s$)} & $41.50$ & $55.30$ & $67.10$ & $69.80$& $44.20$  \\ 
      \hline
      \text{Gate error ($10^{-3}$)} & $1.97$ & $1.29$ & $1.97$ &$1.63$ &$0.94$  \\ 
      \hline
      \text{Readout error ($10^{-2}$)} & $4.50$ & $3.60$ & $2.00$ &$1.60$ &$2.50$ \\ 
      \hline
      \text{MultiQubit gate error ($10^{-2}$)} &CX0-1:3.46; CX0-2:4.07 & CX1-2:3.26 &/ & CX3-2:2.76; CX3-4:2.23 & CX4-2: 2.66  \\ 
    \end{tabular}
    \caption{{\bf Calibration data of ``ibmqx2'':} The table shows the calibration information at the time of the  QST experiments, when the lastest calibration time was 2018-01-29 05:41:13.}\label{tab:calibration}
  \end{center}
\end{table}

For the results shown in Table~\ref{tab:ibmq}, GHZ states of~$2$, $3$, and~$4$ qubits were prepared on the $5$-qubit device ``ibmqx2''™, and measured with respect to single-qubit Pauli operators. Data was collected for $1024$ shots in each direction.  The connectivity map for the CNOT on ``ibmqx2''™ is  $\{0: [1, 2], 1: [2], 3: [2, 4], 4: [2]\}$, where $a: [b]$ means a CNOT with qubit a as control and b as target can be implemented. The corresponding circuits that implement the GHZ states  GHZ2, GHZ3, GHZ4 are shown in Fig~\ref{fig:circuit}. We refer to  Table~\ref{tab:calibration} for the calibration information.  Further details can be found at:~\textit{ https://github.com/QISKit/ibmqx-backend-information/blob/master/backends/ibmqx2/README.md}.

\subsection{Comparison to Christandl \& Renner Confidence Regions} \label{appD}

\begin{figure}[t]
  \centering
  \begin{subfigure}[t]{0.4\linewidth}
    \includegraphics[width=\linewidth]{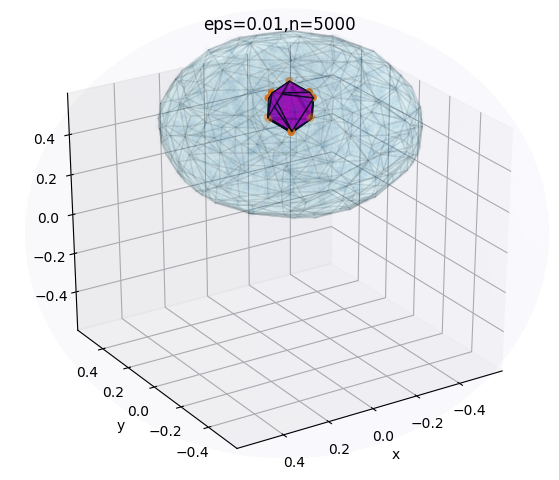}
  \end{subfigure}
  \caption{{\bf Comparison to the Christandl \& Renner confidence region:} The purple region is our confidence polytope and the light blue one is the Christandl \& Renner confidence region. The confidence level was set to 0.99 and the data obtained from 5000 repetitions of (simulated) SIC POVM measurements on an identically prepared qubit. 
  }
  \label{fig:sizes}
\end{figure}

The numerical results shown in Fig.~\ref{fig:scaling} indicate that the error ratio $\eps/\eps_b$ of our confidence polytopes is of the order $10$, whereas the Christandl \& Renner confidence regions, by construction, have an error ratio of the order $\mathrm{poly}(n)$, where $n$ is the data size. Hence, compared to the latter,  confidence polytopes  are substantially smaller. Fig.~\ref{fig:sizes} illustrates the relative sizes for the case of QST on a single qubit.

\subsection{Skewed SIC POVM}\label{appE}

The skewed SIC POVM used for the example shown in Fig.~\ref{fig_LearningFrom Polytope} is defined as
$E_i=\frac{1}{2}(\id +\boldsymbol{\eta}_i \cdot \boldsymbol{\sigma})$
with
$$\boldsymbol{\eta}_0=(0,0,1), \boldsymbol{\eta}_1=(3/10,0,-1/3), \boldsymbol{\eta}_2=(-3/20, 3/20, -1/3), \boldsymbol{\eta}_3=(-3/20, -3/20, -1/3) \ .$$

\bibliographystyle{apsrev}

 \providecommand \doibase [0]{http://dx.doi.org/}%

\end{document}